\newtheorem{theorem}{Theorem}
\newtheorem{corollary}{Corollary}
\newenvironment{proof}{\textbf{Proof}}{\eop}
\def\eop{\hfill{$\Box$}\medskip}
\begin{document}

\begin{frontmatter}

\title{Demystification of Entangled Mass Action Law}

\author[NNU,ICM]{A. N. Kirdin\corref{cor1}}
\ead{kirdinalexandergp@gmail.com}
\author[NNU]{S. V. Stasenko}
\ead{stasenko@neuro.nnov.ru}

\address[NNU]{Lobachevsky University, Nizhni Novgorod, Russia}
\address[ICM]{Institute for Computational Modelling, Russian Academy of Sciences, Siberian Branch, Krasnoyarsk, Russia}
 \cortext[cor1]{Corresponding author}

\date{}

\begin{abstract}
Recently, Gorban (2021) analysed some kinetic paradoxes of the transition state theory and proposed its revision that gave the ``entangled mass action law'', in which new reactions were generated as an addition to the reaction mechanism under consideration.
These paradoxes arose due to the assumption of quasiequilibrium between reactants and transition states.
 
In this paper, we provided a brief introduction to this theory, demonstrating how the entangled mass action law equations can be derived in the framework of the standard quasi steady state approximation in combination with the quasiequilibrium generalized mass action law for an auxiliary reaction network including reactants and intermediates.  We also proved the basic physical property (positivity) for these new  equations, which was not obvious in the original approach. 
\end{abstract}
\begin{keyword}
transition state; quasiequilibrium; quasi steady state; entangled mass action law; generalised mass action law
\end{keyword}
\end{frontmatter}

\section{Introduction}

Mass Action Law (MAL) exists in two basic forms: 
\begin{enumerate}
\item Equilibrium (or static) MAL that describes chemical equilibria by the systems of algebraic equations; 
\item Dynamic (or kinetic) MAL invented for description of chemical dynamics.
\end{enumerate}
Both versions were proposed by Guldberg and Waage in a series of papers in 1864--1879. 

In physical chemistry, the static MAL was developed to the most general thermodynamic form for perfect and non-perfect systems by Gibbs \cite{Gibbs1875}. The physical justification of the dynamic MAL was provided in 1935 simultaneously by Eyring, Polanyi and Evans. They introduced transition states (activated complexes) as universal intermediates in chemical reactions (we refer to the analytic review \cite{Laidler1983} for the basic notions and further references). 

The static MAL was used in  Transitions State Theory (TST) to describe the quasiequilibrium between the reactants and the transition state. Recently it was demonstrated that the assumption about quasiequilibrium between reactants and transition states leads to some paradoxes in modelling of multistage or reversible reactions \cite{Perez-Benito2017,Gorban2021}. A new kinetic framework for the TST was proposed. 

Surprisingly, these new models led to the same (generalised) MAL expressions but with the entanglement effect: the MAL rate of some elementary reactions are also included in the reaction rates of other reactions. These results created a  mystery: if we tried to abandon MAL quasiequilibrium assumption then we came to the same MAL with the shuffled reaction rates. This puzzle must be solved. Some technical questions also remain open. For example, positivity of the quasi steady state concentrations of the intermediates should be proven. Explicit formulas for entropy production are also very desirable.

{In this work, we obtained the following results:
\begin{enumerate}
\item We demonstrated how the new entangled MAL equations may be derived in the framework of the standard quasi steady state assumption combined with the quasiequilibrium generalised mass action law. 
\item We proved a basic physical property (positivity) of the new entangled MAL equations, which was not obvious in the original work \cite{Gorban2021}
\end{enumerate}
}
 
In Sec.~\ref{Sec:Paradoxes} we briefly describe the kinetic paradoxes in the TST. In Sec.~\ref{Sec:Fundam} the fundamentals of entangled MAL are presented with the Positivity Theorem and a  simple (perhaps, the simplest) example. Sec.~\ref{Sec:Demyst} demystifies the entangled MAL theory and represents a complex reaction as a network of the first order transitions between intermediates and generalised MAL transitions between the complexes of the reactants and the corresponding intermediates.

\section{Quasiequilibrium paradox in transition state theory \label{Sec:Paradoxes}}

The most prominent approach to justifying Mass Action Law (MAL)  was provided by Eyring, Polanyi and Evans \cite{Laidler1983}. They introduced transition states (activated complexes) as universal intermediates in chemical reactions. The basic textbook scheme is (\ref{actcompl}\begin{equation}\label{actcompl}
A+B \rightleftharpoons [A-B]  \to \mbox{ Products }.
\end{equation}
Here we use the notation $[A-B]$ for the transition state or activated complex.
The key assumption was that the activated complexes are in quasi-equilibrium with the reactants. 
Therefore, the quasiequilibrium concentrations of the activated complex can be estimated using thermodynamics, and the overall reaction rate is the product of this concentration and the reaction rate constant for the $[A-B] \to \mbox{ Products }$ transition.

An additional assumption is the low concentration of the activated complex compared to the concentrations of the reactants. Without this hypothesis, the MAL formulas cannot be produced \cite{Gorban2011}. 

Thus, the problem of estimating the reaction rate was divided into two tasks:
\begin{itemize}
\item {\it Thermodynamic}  equilibration   $A+B \rightleftharpoons [A-B]$;
\item {\it Dynamic}  evaluation of transition rate $[A-B]  \to \mbox{ Products }$.  
\end{itemize}

This nice picture hides several problems. First of all, the reaction can be reversible.  Consider, just for simplicity, products $C+D$ (\ref{actcompl2}): 
 \begin{equation}\label{actcompl2}
A+B \rightleftharpoons [A-B]  \rightleftharpoons C+D.
\end{equation}
Microreversibility and detailed balance require that the reverse reaction in (\ref{actcompl2}) follows the same route as the initial reaction. According to the quasiequilibrium assumption, the reaction $[A-B]  \rightleftharpoons C+D$ should be also in quasiequilirium. Simple algebra demonstrates that these two quasiequilibrium assumptions imply complete equilibrium and reaction vanishes \cite{Perez-Benito2017, Gorban2021}. Two solution to this paradox were proposed:
\begin{enumerate}
\item Consider the model with two intermediates and transition between them:
$$A+B \rightleftharpoons [A-B] \rightleftharpoons   [C-D] \rightleftharpoons C+D.$$
Asymptotic assumptions about two quasiequilibria, $A+B \rightleftharpoons [A-B]$ and $[C-D] \rightleftharpoons C+D$, and smallness of the $[A-B]$ and $[C-D]$ concentrations lead to classical MAL in very wide conditions \cite{Gorban2011, Gorban2015}.
\item Abandon the quasiequilibrium hypothesis but keep the assumption that the concentration of the active complex is much smaller than that of  the reactants and prove this assumption when possible \cite{Perez-Benito2017}. This assumption violates the polynomial MAL and leads to more complex rational reaction rate dependencies.
\end{enumerate}

Both approaches have a long history. Combination of quasiequilibrium and small concentration assumptions for intermediate compounds was used by Michaelis and Menten in 2013 \cite{Michaelis1913}. Stueckelberg in 1952 \cite{Stueckelberg1952} used the same two assumptions for analysis of the Boltzmann equation beyond microreversibility and proved general semidetailed balance that is known now also as cyclic balance or complex balance. The quasiequilibrium condition in enzyme kinetics was abolished by \cite{Briggs1925}. (For more modern analysis we refer to the work by \cite{SegelSlemrod1989}.) They assumed only the smallness of intermediate concentrations and obtained a nonpolynomial reaction rate, which was called the Michelis--Menten kinetics. The same formula was proposed by \cite{Perez-Benito2017} for general transition state kinetics. 

This approach may give correct answers but has some logical issues: We aim to justify MAL for general (non-linear) kinetics. The transition state theory uses thermodynamic definition of quasiequilibrium (the static MAL) and simple first order Markov kinetics for transition of activated complex. The result is the MAL kinetics for nonlinear reactions of arbitrary complexity. But if we would like to apply the Briggs--Haldane approach then we must assume dynamic MAL for all elementary transitions from scratch, before justifying.

{
The assumption about small concentrations of the intermediates was used explicitly in enzyme kinetics \cite{Michaelis1913, Briggs1925}, in gas kinetics \cite{Stueckelberg1952}, and in kinetics of heterogeneous catalytic reactions \cite{YBGE1991}.
In TST, it is used usually implicitly and, therefore, needs further clarification. Let us take the basic example from the popular textbook \cite{Atkins2006} (Section ``Transition State Theory"): 
$A+B \rightleftharpoons  C \to P,$
where $A$ and $B$ are the reactants, $C$ is the activated complex and $P$ is the product.
The fast quasiequilibrium assumption gives $[C]=K[A][B]$. The TST produces an estimate of the reaction rate constant $k$ for the transition $C\to P$. After that, we have to exclude $[C]$ from the material balance equations using the quasiequilibrium assumption.
 
The material balance gives
\begin{equation}\label{EqTNTkin}
\frac{d[P]}{dt}=-\frac{dM}{dt}=kK[A][B],
\end{equation}
where $M=[A]+[B]+[C]$
Notice that $\Delta=[B]-[A]$ does not change in the reaction.
The quasiequilibrium assumption provides the {\it quadratic} equation for $[A]$:
\begin{equation}\label{Eq:quadraticQE}
\begin{split}
&2[A]+\Delta+K[A]([A]+\Delta)=M; \\
&[A]=\frac{\sqrt{4+K^2\Delta^2+4KM}-2-K\Delta}{2K}; \\
&[B]=\frac{\sqrt{4+K^2\Delta^2+4KM}-2+K\Delta}{2K}; 
\end{split}
\end{equation}
(the solutions with positive concentrations are selected).

Even for this simple example, the correct kinetic equation with the quasiequilibrium assumption but without smallness of the intermediate concentration $[C]$ differs qualitatively  from the simple kinetics for the elementary reaction $A+B  \to P$. This is not a miracle, because the intermediate with non-small concentration is an additional reservoir for the substances that modifies the reaction rate. 

If we assume that $[C]$ is small then the solution of the quadratic equations can be simplified:
$$[A]=\frac{M-\Delta}{2} +o([C]), \; [B]=\frac{M+\Delta}{2} +o([C]),$$
and the reaction rate has the same form as for the elementary reaction $A+B \to P$: $v=kK(M^2-\Delta^2)/4$.

The assumption about smallness of concentrations can appear in various forms: short lifetime of intermediates, small equilibrium constant, etc. Most of these assumptions are essentially the same but accurate reconciliation between them is needed for the proper slow/fast separation. For example, if $[C]$ is small then the reaction rate constant $k$ should be large in order to make the reaction rate non-negligible. On another hand, if we assume that the life time of the  intermediates is not small (and, hence, $k$ is not large) then their concentrations should not be small in order to have a non-zero asymptotic reaction rate. 
}

The approach with several intermediate states and first order kinetic transitions between them \cite{Stueckelberg1952,Gorban2011,Gorban2015} was extended and applied (independently) to the transition state theory \cite{DiGezu2017} with introduction of many intermediates and  Markov kinetics of their transformations.

Let us start formally from the scheme with two intermediates, $B_{\rho}^+$, $B_{\rho}^-$ for each elementary reaction. A complex reaction is represented by the system of stoichiometric equations:
\begin{equation}\label{Eq:StueckStoich}
\sum_i \alpha_{\rho i}A_i\rightleftharpoons B_{\rho}^+ \to B_{\rho}^- \rightleftharpoons \sum_i \beta_{\rho i}A_i\ ,
\end{equation}
where $A_i$ are components (substances), $\rho$ is the number of the elementary reaction,  $B_{\rho}^{\pm}$ are intermediate compounds, and $\alpha_{\rho i}, \beta_{\rho i} \geq 0$ are stoichiometric coefficients, usually non-negative integers.

Two asymptotic assumptions about (1) smallness of the $B_{\rho}^{\pm}$ concentrations (comparing to the concentrations of $A_i$) and (2) fast quasiequilibria of the reversible reactions $\sum_i \alpha_{\rho i}A_i\rightleftharpoons B_{\rho}^+$ and  ${B_{\rho}}^- \rightleftharpoons \sum_i \beta_{\rho i}A_i$ unambiguously entail the consequence: in this asymptotic the intermediates can be excluded and the ``brutto'' reaction mechanism (\ref{Eq:BruttoStoich})
\begin{equation}\label{Eq:BruttoStoich}
\sum_i \alpha_{\rho i}A_i \to \sum_i \beta_{\rho i}A_i
\end{equation}
satisfies the Generalised Mass Action Law (GMAL) with reaction rate (\ref{Eq:GMAL})
\begin{equation}\label{Eq:GMAL}
r_{\rho}=\varphi_{\rho}\exp\left(\sum_i \alpha_{\rho i}\frac{\mu_i}{RT}\right)\ ,
\end{equation}
where $\varphi_{\rho} \geq 0$ are non-negative variables (``kinetic factors''), $\mu_i$ are the chemical potentials of $A_i$, $R$ is the universal gas constant, $T$ is the temperature, and $\exp(\sum_i \alpha_{\rho i}{\mu_i}/{RT})$ is the purely thermodynamic ``Boltzmann 
factor''.

At this stage, we followed Stueckelberg \cite{Stueckelberg1952} and did not assume a detailed balance or entropy growth in the  reaction network. Nevertheless,  in this limit, the kinetic factors, $\varphi_{\rho}$ (\ref{Eq:GMAL}) always satisfy the semidetailed balance condition (known also as the cyclic or complex balance)
\begin{equation}\label{complexbalance}
\sum_{\rho, \,\alpha_{\rho}=y} \varphi_{\rho}\equiv
\sum_{\rho, \,\beta_{\rho}=y} \varphi_{\rho}
\end{equation}
for any vector $y$ from the set of all vectors
$\{\alpha_{\rho}, \beta_{\rho}\}$.
This statement was proven for the Boltzmann collision by Stueckelberg \cite{Stueckelberg1952}. Later it was generalised  for the general chemical reaction mechanisms  \cite{Gorban2011}, and for the general nonlinear Markov processes \cite{Gorban2015}. It is worth to mention that for the non-trivial limit the reaction rates of the compounds should be properly scaled (to $\infty$) when their concentrations tend to 0 to keep the proper order of their products \cite{Gorban2021}.

{
Already a simple example demonstrates that the identities (\ref{complexbalance}) are
weaker than the detailed balance conditions. Consider the reaction mechanism: (1) $2A_1 \to 2 A_2$, (2) $2A_2 \to A_1 +A_3$, (3) $A_1+ A_3 \to 2 A_3$, (4) $2A_3\to 2A_1$, and (5) $A_1+A_3 \to 2A_1$. In particular, this scheme resembles some mechanisms of the surface reactions in heterogeneous catalysis \cite{Marin2019}. To formulate the semidetailed balance conditions, we should prepare the list of stoichiometric vectors $\alpha_{\rho}$ and 
$\beta_{\rho}$
\begin{equation*}
\begin{split}
&\alpha_1=\beta_4=\beta_5=\left(\begin{array}{c}2 \\ 0 \\ 0 \end{array}\right); \;
\alpha_2=\beta_1= \left(\begin{array}{c}0 \\ 2 \\ 0 \end{array}\right); \\ 
&\alpha_3=\alpha_5=\beta_2=\left(\begin{array}{c}0 \\ 1 \\1 \end{array}\right); \;
\alpha_4=\beta_3=\left(\begin{array}{c}0 \\ 0 \\2 \end{array}\right). 
\end{split}
\end{equation*}
There are four different vectors in this list, therefore, there are four complex balance identities (\ref{complexbalance}): 
\begin{enumerate}
\item $\varphi_1=\varphi_4+\varphi_5$,
\item $\varphi_2=\varphi_1$,
\item $\varphi_2=\varphi_3+\varphi_5$,
\item $\varphi_3=\varphi_4$.
\end{enumerate} 
Only three identities of them are independent.
After obvious simplifications, we obtain a two-dimensional cone of the vectors $(\varphi_i)$ of complexly balanced kinetic factors  in the five-dimensional positive orthant. This cone is parametrized by $\varphi_3$ and $\varphi_5$: $\varphi_i>0$,   $\varphi_1=\varphi_2=\varphi_3+\varphi_5$, $\varphi_4=\varphi_3$. All the reactions are irreversible, hence, the detailed balance conditions do not hold. 
}

Nevertheless, systems with microreversibility, detailed balance, and positive equilibria form the most common and well studied class of chemical kinetics.  For them, the direct and reverse reactions can be coupled in one stoichiometric equation (\ref{Eq:StueckStoichDB})
\begin{equation}\label{Eq:StueckStoichDB}
\sum_i \alpha_{\rho i}A_i\rightleftharpoons B_{\rho}^+ \rightleftharpoons B_{\rho}^- \rightleftharpoons \sum_i \beta_{\rho i}A_i\ ,
\end{equation}
The asymptotic limit has also the reversible form (\ref{Eq:BruttoStoichDB}) 
\begin{equation}\label{Eq:BruttoStoichDB}
\sum_i \alpha_{\rho i}A_i \rightleftharpoons \sum_i \beta_{\rho i}A_i\ ,
\end{equation}
and the conditions (\ref{complexbalance}) transform into the beautiful {\it detailed balance} conditions:
\begin{equation*}
\varphi_{\rho}^+=\varphi_{\rho}^-,
\end{equation*}
here, we can omit the $\pm$ superscripts: $\varphi_{\rho}^+=\varphi_{\rho}^-=\varphi$.  
For reversible systems with detailed balance, it is convenient to factorise the reaction rate of reversible reaction into non-negative kinetic and thermodynamic (Boltzmann's) factors:
\begin{equation}
r_{\rho}=r_{\rho}^+-r_{\rho}^-=\varphi_{\rho}\left(\exp\left(\sum_i \alpha_{\rho i}\frac{\mu_i}{RT}\right) - \exp\left(\sum_i \beta_{\rho i}\frac{\mu_i}{RT}\right)\right).
\end{equation} 

GMAL was invented in order to meet the thermodynamic restrictions on kinetics \cite{Grmela2010, Pavelka2018}. The asymptotic limit  \cite{Stueckelberg1952, Gorban2011, Gorban2015} demonstrated even more: GMAL can be produced from the classical equilibrium thermodynamics and two asymptotic assumptions: fast quasiequilibria and small concentrations of intermediates. Without any additional dynamic assumption like microreversibility, this asymptotic analysis gives also the semidetailed/cyclic/complex balance conditions (\ref{complexbalance}) that guarantees the thermodynamic properties of the models. For the systems with microreversibility it transforms into the detailed balance conditions. 

Table \ref{Tab:GMALtoMAL} presents a vocabulary to translate general expressions of GMAL to the particular case of the textbook MAL \cite{Marin2019}.

\begin{table}
\caption{GMAL to MAL translation. \label{Tab:GMALtoMAL} Here, $c_i$ is the concentration of $A_i$, $c^*=(c_i^*)$ is the positive vector of standard equilibrium where all $\mu_i=0$.} 
\begin{center}
\begin{tabular}{ l l } 
 \hline
GMAL expression & MAL expression \\ 
\hline
 \\
 Free energy density $g(c,T)$ & $RT\sum_i c_i \left(\ln\frac{c_i}{c_i^*} - 1\right)$ \vspace{0.15 cm}  \\ 
Chemical potential $\mu_i = \left(\frac{\partial g}{\partial c_i}\right)_{T,V}$ & $RT\ln\frac{c_i}{c_i^*}$ \vspace{0.15 cm}  \\
Boltzmann factor $\exp\left(\sum_i \alpha_{\rho i}\frac{\mu_i}{RT}\right)$ & $\prod_i \left(\frac{c_i}{c_i^*}\right)^{\alpha_{\rho i}}$ \vspace{0.15 cm}  \\ 
Kinetic factor $\varphi_{\rho}$ & $\varphi_{\rho}= k_{\rho}\prod_i {c_i^*}^{\alpha_{\rho i}}$ \vspace{0.15 cm}   \\  
React. rate $r_{\rho}=\varphi_{\rho}\exp\left(\sum_i \alpha_{\rho i}\frac{\mu_i}{RT}\right)$ & 
$r_{\rho}=k_{\rho}\prod_i {c_i}^{\alpha_{\rho i}}	$ \vspace{0.15 cm} \\  
\hline
\end{tabular}
\end{center}
\end{table}

Existence of the standard {\it positive} equilibrium is, surprisingly, a non-trivial question. It is connected with the perfect asymptotic expressions for the free energy of small admixtures and the logarithmic singularities of the chemical potentials at the border of the positive orthant of concentrations. The limit of kinetic systems when some $c_i^*\to 0$ depends on the rates of convergence to zero for different $c_i^*$ and their ratio. These questions need to be discussed separately are partially answered \cite{GorbanYabl2011} for systems with detailed balance. 

Table~\ref{Tab:GMALtoMAL} demonstrates the physical sense of the kinetic factors $\varphi_{\rho}$ for MAL. They are just the reaction rates at the standard equilibria.

The equations of chemical kinetics (for $T,V=const$) have the form (\ref{Eq:KinUr})
\begin{equation}\label{Eq:KinUr}
\frac{d c_i}{dt}=\sum_{\rho} r_{\rho} \gamma_{\rho i},
\end{equation}
where  $\gamma_{\rho i}=\beta_{\rho i }-\alpha_{\rho i}$.

If the system is produced from (\ref{Eq:StueckStoich}) in the limit of fast quasiequilibria and small intermediates then the reaction rates satisfy the semidetailed/cyclic/complex balance conditions (\ref{complexbalance}) and the free energy changes monotonically in time: in accordance to (\ref{Eq:KinUr}), $dg/dt \leq 0$  \cite{Gorban2011}. Especially simple form $dg/dt$ takes for the systems with detailed balance \cite{Gorban2013}: 
$$\frac{dg}{dt}=-RT\sum_{\rho}(r_{\rho}^++r_{\rho}^-)\mathbb{A}_{\rho}\tanh\frac{\mathbb{A}_{\rho}}{2}\leq 0,$$
where $\mathbb{A}_{\rho}=\sum_i \gamma_{\rho i} \mu_i/RT$ is {\em affinity}, the sum of reaction rates of direct and reverse elementary reactions, $r_{\rho}^++r_{\rho}^-$, is non-negative as well as the product $\mathbb{A}_{\rho}\tanh(\mathbb{A}_{\rho}/{2})$.   

For the non-isochoric conditions, it is more convenient to write the equations for the amounts of $A_i$, the extensive variables  $N_i=Vc_i$: ${d N_i}{dt}=V\sum_{\rho} r_{\rho} \gamma_{\rho i}$. Their thermodynamic properties are also well-studied \cite{Gorban2011, Hangos2010}, 

The MAL equations without any conditions on the reaction rate constant can approximate any dynamics on the reaction polyhedron (that is the intersection of the positive cone with the linear manifold with given values of the linear conservation laws). The GMAL equations constructed in concordance with thermodynamics have additional restriction in the form of semidetailed/cyclic/complex balance (\ref{complexbalance}) (for Markov microscopic kinetics) or detailed   balance (for reversible Markov microscopic kinetics). 

There may be additional difficulty: a system of kinetic equations can have non-unique  MAL representation. For more detailed analysis we refer to the detailed work \cite{Hangos2011}, 
where numerical procedures are proposed and tested for finding complex balanced or detailed balanced realizations of mass action type chemical reaction
networks.

\section{The fundamentals of entangled MAL \label{Sec:Fundam}}

It seems that the quasiequilibrium paradox for the transition state theory was resolved many  years ago \cite{Stueckelberg1952, Gorban2011, Gorban2015}. But another problem was noticed very recently \cite{Gorban2021}: the assumption about small concentration of intermediates implies that the reaction rate constant of the active complex reverse decomposition should be much larger than the rate constant of its transition towards the product.

For illustration of this statement, consider a very simple example of MAL system:
$$A_1+A_2  \rightleftharpoons  B \to \ldots$$
with reaction rate constants $k_1$ (for $A_1+A_2 \to B$), $k_{-1}$ (for $B \to A_1+A_2$), and $\kappa$ (for $B\to \ldots$).

Note, that neither MAL nor GMAL assumption for the reaction rates is necessary in the transition state theory. Moreover, this theory should produce the `proper' reaction rates from the microscopic dynamics of activated complexes and thermodynamic description of quasiequilibria without dynamic assumptions. Nevertheless, for this simple instructive example we select the system that obeys known dynamic MAL kinetic law at each step.

{
The kinetic equations are
\begin{equation*}
\begin{split}
& \frac{d c_1}{dt}=\frac{d c_2}{dt}=-k_1 c_1c_2+k_{-1}c_B,\\
& \frac{dc_B}{dt}=k_1  c_1c_2-(k_{-1}+\kappa)c_B.
\end{split}
\end{equation*}
}

Scaling for `fast equilibrium' assumes introduction of a small parameter $\varepsilon$ in the rate constants $k_{\pm 1}$: $k_{\pm 1} \leftarrow \frac{1}{\varepsilon} k_{\pm 1}$. Smallness of the concentration of $B$ requires that the equilibrium  $A_1+A_2  \rightleftharpoons  B$ should be strongly shifted to the left. One more small parameter is needed: $k_{-1} \leftarrow \frac{1}{\delta} k_{-1}$. Thus, the reaction rate constants should include two small parameters. These constants are: $\frac{1}{\varepsilon} k_{1}$ (for $A_1+A_2 \to B$) and $\frac{1}{\varepsilon \delta} k_{-1}$ for (for $B \to A_1+A_2)$. 

The quasiequilibrium concentration of $B$ is $c_B=\frac{\delta k_1}{k_{-1}}c_1c_2$ and the brutto reaction rate in the quasiequilibrium approximation is $r=\frac{\kappa  \delta k_1}{k_{-1}} c_1 c_2$ (for $A_1+A_2 \to \ldots$). We can see that for the proper asymptotic $\kappa$ should be scaled as $\delta^{-1}$: $\kappa \leftarrow \frac{1}{\delta} \kappa$. 
{ With these small parameters, the kinetic equations are
\begin{equation}\label{small concentrat}
\begin{split}
& \frac{d c_1}{dt}=\frac{d c_2}{dt}=-\frac{1}{\varepsilon}k_1 c_1c_2+\frac{1}{\varepsilon \delta}k_{-1}c_B,\\
& \frac{dc_B}{dt}=\frac{1}{\varepsilon}k_1  c_1c_2-\left(\frac{1}{\varepsilon \delta}k_{-1}+\frac{1}{\delta}\kappa\right)c_B.
\end{split}
\end{equation}
} 
In this case, the reaction rate for the observable brutto reaction does not include small parameters and is $r=\frac{\kappa k_1}{k_{-1}} c_1 c_2$. Thus, in the proper explicit scaling, there are two transitions from the intermediate $B$: (1) the reverse transition to the initial reactants $B\to A_1+A_2$ with the reaction rate constant $\frac{1}{\varepsilon \delta} k_{-1}$ and  (2) the reaction towards the products $B\to \ldots$ with the reaction rate constant $\frac{1}{\delta} \kappa$. Here, $k_{-1}$ and $\kappa$ are the reaction rate constants before scaling, and the positive small parameters $\varepsilon$ and $\delta$ are used for description of the asymptotics. 

We can see, that together with the ``natural'' assumptions about quasiequilibrium between reactants and intermediate compounds and about small concentrations of intermediates we inevitably approach a strong asymmetry in microscopic transitions: the reverse transformation of the intermediates into reactants is much faster than its reaction towards products (Fig.~\ref{Fig:Asymmetry}). Formally, this is not a contradiction, because the overall (`brutto') reaction rate satisfies the MAL with the reaction rate constant that does not depend on the scaling parameters $\varepsilon$ and $\delta$. Nevertheless, from the physical point of view such asymmetry looks unreasonable and needs either a further explanation or a demonstration that in the next approximation MAL persists. Surprisingly, MAL is valid beyond the quasiequilibrium assumption for transition states \cite{Gorban2021}.  

\begin{figure}
\centering{
\includegraphics[width=0.5 \columnwidth]{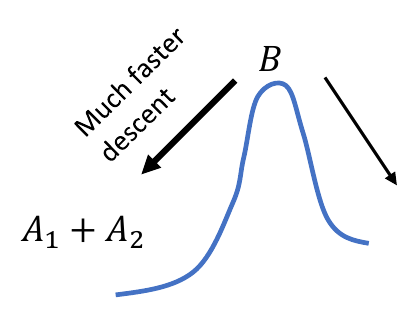}
 \caption{\label{Fig:Asymmetry} {{\it Asymmetry of the activation complex transitions.} The asymptotic assumption about smallness of intermediate concentration implies asymmetry between the forward and backward transitions of the activated complex (\ref{small concentrat}). Schematically, the system returns to the initial reactants much more frequently than goes forward. This asymmetry does not contradict any first principle but is a strong additional assumption. We try to relax this assumption and use the relaxation time approximation instead of the limit $\delta \to 0$ and the degenerate system.}}}
\end{figure}

What do we know about kinetics of the transitions between the reactants and the intermediates, 
\begin{equation}\label{Eq:AtoB}
\sum_i \alpha_{ i}A_i\rightleftharpoons B,
\end{equation}
in (\ref{Eq:StueckStoich}) if we do not know the kinetic law but know classical thermodynamics? 
\begin{enumerate}
\item This reaction moves the system along the straight line in the concentration space. The direction is defined by the stoichiometric vector $\gamma$ with the coordinates $\gamma_{i}=-\alpha_{i}$ for $i=1, \ldots, n$ (for the concentrations of $A_i$) and $\gamma=1$ for the concentration of $B$, all other coordinates are zeros. 
\item The reaction rate for this transition is zero when the free energy derivative in direction $\gamma$ is zero. The surface of these quasiequilibrium states can be parametrised as the function  $\varsigma=\varsigma^{\rm qe}(c,T)$, where $\varsigma$ is the concentration of $B$.
\item The reaction rate in a small vicinity of the quasiequilibrium surface can be presented in the `relaxation time' approximation:
\begin{equation}\label{Eq:LinearizReactRate}
r=\frac{1}{\tau}(\varsigma-\varsigma^{\rm qe}(c,T))+ o(|\varsigma-\varsigma^{\rm qe}(c,T)|).
\end{equation}
{\it Warning}: It is important to mention that kinetic equations with this linearised reaction rate are not linear because two reasons: (1) the non-linearity is hidden in the dependence $\varsigma^{\rm qe}(c,T)$ and (2) moreover, for a multi-stage reaction the quasiequilibrium surfaces are different and  reaction rate of each elementary stage is the result of linearisation near its own surface.
\end{enumerate}

Thermodynamic properties of a small admixture in a well-mixed homogeneous systems are similar to the perfect gas.  In the assumption that the concentration of $B$ is small, the free energy density is the sum of the free energy density $f(c,T)$ of the mixture of reactants $A_i$ and the free energy density of $B$ that has the perfect form: $RT \varsigma (\ln(\varsigma/\varsigma^*(c,T))-1)$. Here, $\varsigma^*(c,T)$ is the standard equilibrium for $B$ at the given concentrations  of reactants and temperature.

If the concentrations of intermediate $\varsigma$, $\varsigma^*$ are $\delta$-small with the partial derivative $\partial \varsigma^*/\partial c_i$ then the quasiequilibrium concentration  of  $B$ in the reaction (\ref{Eq:AtoB}) for given concentrations $c_i$ is \cite{Gorban2011}:
\begin{equation}\label{Eq:QEintermed}
\varsigma^{\rm qe}=\varsigma^*(c,T)\exp\left(\frac{\sum_i \gamma_i \mu_i}{RT}\right)+ O(\delta^2).
\end{equation}
This is the standard quasiequilibrium approximation for the concentrations of the activated complex in the TST. In what follows,  the $o$ terms are omitted.

It is convenient to represent the network of reactions (\ref{Eq:StueckStoich}) in a slightly more general form with the ensemble of equilibration reactions $\sum_i \nu_{\rho i}A_i\rightleftharpoons B_{\rho}$ and the network of monomolecular transitions between the activated complexes $B_{\rho}\to B_{\eta}$.  The reaction rates of the equilibration steps satisfy the `relaxation time' approximation (\ref{Eq:LinearizReactRate}) with the relaxation time $\tau_{\rho}$ and the transitions between activated complexes obey the first order kinetics with the reaction rate constants $\kappa_{\eta \rho}$. The notation $\kappa_{\eta \leftarrow \rho}$ can also be used to indicate the direction of the transition. The stoichiometric coefficients $\nu_{\rho i}$ can play the role of the input coefficients $\alpha_{\rho i}$ and of the output coefficients $\beta_{\rho i}$ depending of the context. The Greek indexes ($\rho$, $\eta$) in this network enumerate the formal sums $\sum_i \nu_{\rho i}A_i$. These formal sums of the reactants are called {\it complexes}. { They  are used in all formalisms of chemical kinetics. Complexes should not be confused with  activated complexes or transition states.} A systematic presentation of chemical kinetics based on the analysis of transformation graphs of complexes was given by Feinberg \cite{Feinberg2019}.  

The kinetic equations for this network can be simplified using the smallness of intermediate concentrations. After exclusion of the intermediates, we obtain the
equations:

\begin{equation}\label{result2}\boxed{
\begin{split}
&\frac{dc_i}{dt}= \sum_{\rho \eta, \, \rho\neq \eta}  \kappa_{\rho \eta}\varsigma^{\rm qss}_{\eta} (c, T) (\nu_{\rho i}-\nu_{\eta i}) ;\\
&\varsigma^{\rm qss}(c, T) =\left(1- {\rm diag}[\tau_i] K\right)^{-1}\varsigma^{\rm qe}(c, T).
\end{split}}
\end{equation}
Here, $\varsigma^{\rm qe}(c, T)$ is the vector of the quasiequilibrium concentrations of the intermediates calculated by (\ref{Eq:QEintermed}); $\varsigma^{\rm qss}(c, T)$ is the vector of the quasi steady state concentrations. The scaling parameter is omitted. $K$ is the matrix of the coefficients for the first-order kinetic equations that describe transitions between intermediates: $K_{\rho \eta}=\kappa_{\rho \eta}$ if $\rho \neq \eta$ and
$K_{\rho \rho}=- \sum_{\eta, \eta\neq \rho} \kappa_{\eta \rho }$. (The sums in the columns must be zero, which means that the total amount is preserved in the first order kinetics.) 

If all $\tau_i \to 0$ then equations (\ref{result2}) transform into the quasiequilibrium GMAL with $\varsigma^{\rm qss}=\varsigma^{\rm qe}$ given by (\ref{Eq:QEintermed}). For the general reaction scheme the semidetailed/cyclic/complex balance conditions (\ref{complexbalance}) means, at the microscopic level, just the Markov chain balance condition:
\begin{equation}\label{balanceB*}
\sum_{\eta, \rho \neq \eta}\kappa_{\rho \eta}\varsigma^*_{\eta}
=\sum_{\eta, \rho\neq \eta}\kappa_{\eta\rho}\varsigma^*_{\rho}.
\end{equation}
Here, the right hand side is the flux from $B_{\rho}$ to all other intermediates and the left hand side is the flux from all intermediates to $B_{\rho}$. The flux is evaluated at the `standard equilibrium' $\varsigma^*$. 
This identity means that $\varsigma^*$ is an equilibrium of the first order kinetics of transition network $B_{\eta}\to B_{\rho}$ with reaction rate constants $\kappa_{\rho \eta}=\kappa_{\rho \leftarrow \eta}$. 

One can apply (\ref{result2}) to real system only if the quasi steady state (qss) concentrations are non-negative: $\varsigma_i^{\rm qss}\geq 0$ for all $i$ if the concentrations $c_i$ are positive.  Let us formulate and prove the positivity theorem. Let the matrix of kinetic coefficients $K$ of the first order transitions between intermediates and the relaxation times $\tau_{\rho}>0$ of the fast quasiequilibria $\sum_i \nu_{\rho i}A_i\rightleftharpoons B_{\rho}$ be given.

\begin{theorem}[Positivity Theorem]\label{Th:Positivity}
The entanglement matrix $E=\left(1- {\rm diag}[\tau_i] K\right)^{-1}$ exists and is non-negative with strictly positive diagonal. 
\end{theorem}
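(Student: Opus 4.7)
The plan is to recognise $A := I - \mathrm{diag}[\tau_i]\,K$ as a Z-matrix with strictly positive diagonal and then show it is a non-singular M-matrix; the non-negativity of $E = A^{-1}$ and the strict positivity of its diagonal will follow from standard M-matrix theory. The sign pattern is immediate: for $\rho\neq\eta$ one has $A_{\rho\eta} = -\tau_\rho \kappa_{\rho\eta}\le 0$, while $A_{\rho\rho} = 1 + \tau_\rho \sum_{\eta\neq\rho}\kappa_{\eta\rho} > 0$.

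The decisive step is supplying a witness for the M-matrix property. I would use the characterisation that a Z-matrix $A$ is a non-singular M-matrix iff some $y > 0$ satisfies $y^T A > 0$ entrywise (i.e.\ the usual ``$\exists x > 0:Ax>0$'' criterion applied to $A^T$, which is also a Z-matrix). The natural candidate comes from the column-sum-zero identity $\mathbf{1}^T K = 0$, noted in the paper right after the definition of $K$ as the statement that the total amount is preserved by the first-order kinetics. Taking $y = \mathrm{diag}[1/\tau_i]\mathbf{1}$ produces in one line $y^T A = y^T - \mathbf{1}^T K = y^T > 0$, which certifies that $A$ is a non-singular M-matrix; hence $E$ exists and satisfies $E \ge 0$.

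For strict positivity of the diagonal I would apply $A$ to the $i$-th column $u$ of $E$: from $Au = e_i$, reading off the $i$-th component gives $(1+\tau_i\sum_{\eta\neq i}\kappa_{\eta i})\,u_i = 1 + \tau_i \sum_{\eta\neq i}\kappa_{i\eta}\,u_\eta \ge 1$ (since $u\ge 0$ and the off-diagonal contributions on the right are non-negative), so $E_{ii}\ge 1/(1+\tau_i\sum_{\eta\neq i}\kappa_{\eta i}) > 0$. The point to get right is the choice of witness: it is tempting to look for a right eigenvector of $DK$, but that route requires an irreducibility assumption on the transition graph between intermediates; working on the left and exploiting the universal identity $\mathbf{1}^T K = 0$ bypasses this and makes the result unconditional on the connectivity of the graph.
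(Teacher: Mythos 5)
Your proof is correct, and it reaches the conclusion by a recognisably different route from the paper's. The paper first passes to $Y=\mathrm{diag}[\tau_i^{-1}]-K$, observes that $Y$ is strictly diagonally dominant by columns with positive diagonal and non-positive off-diagonal entries, factors $Y=(1-M)\,\mathrm{diag}[y_{ii}]$ with $\|M\|_1<1$, and gets non-negativity of the inverse from the term-by-term non-negative Neumann series $(1-M)^{-1}=1+M+M^2+\cdots$, whose identity term supplies the strictly positive diagonal; the entanglement matrix is then recovered as $E=Y^{-1}\mathrm{diag}[1/\tau_i]$. You instead work directly with $A=I-\mathrm{diag}[\tau_i]K$ and invoke the M-matrix characterisation with the left witness $y=\mathrm{diag}[1/\tau_i]\mathbf{1}$; this weighting is genuinely needed, since $A$ itself need not be diagonally dominant by rows or columns (the $\tau_\rho$ appearing in the off-diagonal entries $-\tau_\rho\kappa_{\rho\eta}$ of column $\eta$ differ from the $\tau_\eta$ in its diagonal entry), and your choice of $y$ plays exactly the role that the passage to $Y$ plays in the paper. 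Both arguments ultimately rest on the same structural fact, $\mathbf{1}^T K=0$. The paper's version is self-contained and, as a bonus, the Neumann series exhibits $E_{ij}$ as a sum of non-negative path contributions through the transition graph (so $E_{ij}>0$ whenever $B_j$ can reach $B_i$); yours is shorter, leans on a standard theorem from M-matrix theory, and makes explicit --- correctly --- that no irreducibility of the transition network is required. Your separate one-line argument for the diagonal, $E_{ii}\ge 1/\bigl(1+\tau_i\sum_{\eta\neq i}\kappa_{\eta i}\bigr)>0$, is sound and even yields a quantitative lower bound that the paper does not state.
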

\begin{proof}
The matrix $Y={\rm diag}[\tau_i^{-1}]-K$ is strictly diagonally dominant by columns with positive diagonal and non-positive non-diagonal terms. Therefore, it is non-singular and has an inverse matrix $Y^{-1}$ \cite{Varga2004}. Notice, that $Y=(1-M){\rm diag}[y_{ii}]$, where $m_{ii}=0$ and $m_{ij}=-y_{ij}/y_{jj}$ when $i\neq j$;  $m_{ij}\geq 0$ and $ 1>\sum_i m_{ij}$. The matrix $M$ is bounded in the matrix-$1$ norm (that is  the maximum absolute column sum of the matrix): $\|M\|_{1}=\max_j\sum_i |m_{ij}|<1$. Therefore, the series
$(1-M)^{-1}=1+M+M^2+\ldots$ converges. Each term in this series is non-negative and the diagonal of the sum is strictly positive. The same is true for the products of $(1-M)^{-1}$ on positive diagonals, $Y^{-1}={\rm diag}[y_{ii}^{-1}](1-M)^{-1} $ and $E=Y^{-1}{\rm diag}[1/\tau_i]$.
\end{proof}

\begin{corollary}
The  qss concentrations of intermediates are positive: $\varsigma_i^{\rm qss}>0$ for all $i$ and positive vector of concentrations $c$.
\end{corollary}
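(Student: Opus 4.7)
The plan is to combine the matrix positivity already established in Theorem~\ref{Th:Positivity} with the obvious positivity of the quasiequilibrium vector $\varsigma^{\rm qe}(c,T)$. Since the qss concentrations are defined by the second line of (\ref{result2}) as $\varsigma^{\rm qss}(c,T)=E\,\varsigma^{\rm qe}(c,T)$ with $E=(1-\mathrm{diag}[\tau_i]K)^{-1}$, the Corollary is really a one-line consequence once both factors are controlled.

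First I would note that $E$ exists and is entrywise non-negative with strictly positive diagonal, which is precisely the content of Theorem~\ref{Th:Positivity}; nothing more needs to be said about the matrix. Next I would verify that every component of $\varsigma^{\rm qe}(c,T)$ is strictly positive whenever $c$ is a positive concentration vector. This follows directly from formula (\ref{Eq:QEintermed}), because the standard equilibrium concentration $\varsigma^*_{\rho}(c,T)$ is by hypothesis a positive reference state, and the Boltzmann factor $\exp(\sum_i \nu_{\rho i}\mu_i/RT)$ is strictly positive for any finite chemical potentials (which in turn are finite when all $c_i>0$, cf.\ Table~\ref{Tab:GMALtoMAL}).

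Combining the two facts, for each intermediate index $\rho$ one has
\begin{equation*}
\varsigma^{\rm qss}_{\rho}=\sum_{\eta} E_{\rho\eta}\,\varsigma^{\rm qe}_{\eta}\;\geq\;E_{\rho\rho}\,\varsigma^{\rm qe}_{\rho}\;>\;0,
\end{equation*}
since the omitted terms $E_{\rho\eta}\varsigma^{\rm qe}_{\eta}$ with $\eta\neq\rho$ are non-negative, the diagonal entry $E_{\rho\rho}$ is strictly positive by Theorem~\ref{Th:Positivity}, and $\varsigma^{\rm qe}_{\rho}>0$ by the previous step. This is the entire argument.

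There is essentially no obstacle here: the whole work has been done in the Positivity Theorem, and the Corollary is just the act of applying the non-negative matrix $E$ to a strictly positive vector while exploiting the positive diagonal to get strict inequality. The only mild subtlety worth flagging in the write-up is that one needs $\varsigma^*_{\rho}(c,T)>0$, which is the standing thermodynamic assumption on the existence of a positive standard equilibrium already discussed after Table~\ref{Tab:GMALtoMAL}; otherwise the $\varsigma^{\rm qe}$ factor could vanish and only non-strict positivity would be obtained.
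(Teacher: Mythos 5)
Your proposal is correct and follows essentially the same route as the paper: positivity of $\varsigma^{\rm qe}$ from the thermodynamic formula (\ref{Eq:QEintermed}) combined with Theorem~\ref{Th:Positivity} applied to $\varsigma^{\rm qss}=E\,\varsigma^{\rm qe}$. Your explicit bound $\varsigma^{\rm qss}_{\rho}\geq E_{\rho\rho}\,\varsigma^{\rm qe}_{\rho}>0$ merely spells out the step the paper states in one line, and your remark on the need for $\varsigma^*_{\rho}>0$ is a reasonable, harmless clarification.
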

\begin{proof}
According to thermodynamic definitions (\ref{Eq:QEintermed}), the qe concentrations are positive: $\varsigma_i^{\rm qe}>0$  for all $i$. Theorem~\ref{Th:Positivity} guarantees that the vector of the qss concentrations is also positive, $\varsigma^{\rm qss}=E\varsigma^{\rm qe}>0$.
\end{proof}

Let us demonstrate the formula (\ref{result2}) on the simplest toy example. Consider the system with three components, $A_{1-3}$  with concentrations $c_{1-3}$, and  two reversible elementary `brutto'  reactions, $A_1\rightleftharpoons A_2 \rightleftharpoons A_3$. Introduce three intermediates, $B_{1-3}$  with concentrations $\varsigma_{1-3}$. For (\ref{result2}) we need four reaction rate constants for the transitions between intermediates, $B_{\rho} \to B_{\eta}$, three standard equilibria $\varsigma_{\rho}^*$ that satisfy the balance conditions (\ref{balanceB*}):
$$\kappa_{12}\varsigma_{2}^*=\kappa_{21}\varsigma_{1}^*; \; \kappa_{23}\varsigma_{3}^*=\kappa_{32}\varsigma_{2}^*,$$
and the free energy density of reactants $f(c,T)$. For the example, we select the perfect free energy $f=RT\sum_i c_i (\ln(c_i/c_i^*)-1)$. Than the qe concentrations of reactants are $\varsigma^{\rm qe}_i=\varsigma^*_i c_i/c_i^*$.

Let us notice, that for the linearly independent reactions the balance conditions (\ref{balanceB*}) turn into the detailed balance conditions. 
The matrices $K$ and $E^{-1}=1- {\rm diag}[\tau_i] K$ are:
\begin{equation*}
K=\left(\begin{array}{lll}
-\kappa_{21} & \kappa_{12} &                 0 \\
\kappa_{21}  & -\kappa_{12}-\kappa_{32} &\kappa_{23} \\
0            & \kappa_{32}            & -\kappa_{23}   
\end{array} \right); 
\end{equation*} 

\begin{equation*}
E^{-1}=\left(\begin{array}{lll}
1+ \tau_1 \kappa_{21} & -\tau_1 \kappa_{12} &                 0 \\
-\tau_2 \kappa_{21}  & 1+ \tau_2(\kappa_{12}+\kappa_{32}) & -\tau_2 \kappa_{23} \\
0            & -\tau_3 \kappa_{32}            & 1+\tau_3 \kappa_{23}   
\end{array} \right).
\end{equation*}

According to the Positivity Theorem, $E$ is a non-negative matrix with strongly positive diagonal. The kinetic model (\ref{result2}) for the system $A_1 \rightleftharpoons A_2 \rightleftharpoons A_3$ is

\begin{equation}\label{Eq:SimpleKin}
\begin{split}
& \frac{dc_1}{dt}=-\kappa_{21}\varsigma^{\rm qss}_1+\kappa_{12}\varsigma^{\rm qss}_2; \\
& \frac{dc_2}{dt}=\kappa_{21}\varsigma^{\rm qss}_1-(\kappa_{12}+\kappa_{32})\varsigma^{\rm qss}_2 +\kappa_{23}\varsigma_3^{\rm qss}; \\
& \frac{dc_3}{dt}=\kappa_{32}\varsigma^{\rm qss}_2-\kappa_{23}\varsigma^{\rm qss}_3 \\
&\varsigma^{\rm qss}(c, T) =E\, \varsigma^{\rm qe}(c, T).
\end{split}
\end{equation} 

If the life time of the intermediates is small and  $\tau_i \to 0$, then $\varsigma^{\rm qss}_i= \varsigma^{\rm qe}$ and, for the perfect systems, 
$\varsigma^{\rm qss}_i=\varsigma^*_i c_i/c_i^*$. In this case, the system (\ref{Eq:SimpleKin}) is just a usual first order kinetic equation (a continuous time Markov chain) with two reversible transitions. 

For the non-negligible $\tau_i$, the situation seems to be more sophisticated and the reaction rates will be entangled by the matrix $E$. To demonstrate the result of  entanglement, let us calculate $E$ for a very simple symmetric case $\tau_i=1$ and $
\kappa_{12}=\kappa_{21}=\kappa_{23}=\kappa_{32}=1$. In this case, the balance condition (\ref{balanceB*}) gives: $\varsigma^*_1=\varsigma^*_2=\varsigma^*_3\,(=\varsigma^*>0)$. Only one parameter of these three  $\varsigma^*_i$ is free. For the qe concentrations we get $\varsigma^{\rm qe}_i=\varsigma^* c_i/c_i^*$.

\begin{equation*}
\begin{split}
&E^{-1}=\left(\begin{array}{lll}
2 & -1 &    0 \\
-1  & 3 & -1 \\
0            & -1            & 2   
\end{array} \right);
E=\frac{1}{8}\left(\begin{array}{lll}
5 & 2 &   1 \\
2  & 4 & 2 \\
1 & 2  & 5   
\end{array} \right);\\
&\varsigma_1^{\rm qss}=\frac{\varsigma^*}{8}\left(5\frac{c_1}{c_1^*}+2\frac{c_2}{c_2^*}+\frac{c_3}{c_3^*}\right),\\ 
&\varsigma_2^{\rm qss}=\frac{\varsigma^*}{8}\left(2\frac{c_1}{c_1^*}+4\frac{c_2}{c_2^*}+2\frac{c_3}{c_3^*}\right),\\ 
&\varsigma_3^{\rm qss}=\frac{\varsigma^*}{8}\left(\frac{c_1}{c_1^*}+2\frac{c_2}{c_2^*}+5\frac{c_3}{c_3^*}\right). 
\end{split}
\end{equation*}

\begin{equation}\label{Eq:SimplestKin}
\begin{split}
& \frac{dc_1}{dt}=\frac{\varsigma^*}{8}\left(-3\frac{c_1}{c_1^*}+2\frac{c_2}{c_2^*}+\frac{c_3}{c_3^*}\right); \\
& \frac{dc_2}{dt}=\frac{\varsigma^*}{8}\left(2\frac{c_1}{c_1^*}-4\frac{c_2}{c_2^*}-2\frac{c_3}{c_3^*}\right); \\
& \frac{dc_3}{dt}=\frac{\varsigma^*}{8}\left(\frac{c_1}{c_1^*}+2\frac{c_2}{c_2^*}-3\frac{c_3}{c_3^*}\right).
\end{split}
\end{equation} 

Compare (\ref{Eq:SimplestKin}) to the system with the same $\kappa_{\eta \rho}$ and $\varsigma^*_i$ but all $\tau_i=0$ (\ref{Eq:SimplestKinQE}):  

\begin{equation}\label{Eq:SimplestKinQE}
\begin{split}
& \frac{dc_1}{dt}=\varsigma^*\left(-\frac{c_1}{c_1^*}+\frac{c_2}{c_2^*}\right); \\
& \frac{dc_2}{dt}=\varsigma^*\left(\frac{c_1}{c_1^*}-2\frac{c_2}{c_2^*}+\frac{c_3}{c_3^*}\right); \\
& \frac{dc_3}{dt}=\varsigma^*\left(\frac{c_2}{c_2^*}-\frac{c_3}{c_3^*}\right).
\end{split}
\end{equation} 
We can see that if the equilibrium between the reactant complexes (here, these complexes are just the reactants in themselves) and the corresponding intermediates are not infinitely fast comparing to transition between complexes then the new reaction appears, $A_3 \rightleftharpoons A_1$, and all the observable reaction rate constants should be redefined. The kinetic equations (\ref{Eq:SimplestKin}) remain the classical chemical kinetic equations but with additional reactions and new reaction rate constants. The question arises: why the non-standard reasoning  returns the MAL (or GMAL) kinetic equations? In the next section we answer this question for general reaction networks.

\section{Demystification \label{Sec:Demyst}}

The entangled GMAL equations (\ref{result2}) are produces by the QSS approximation of the complex reaction network that includes:
\begin{itemize} 
\item  {\it Unknown} kinetics of transitions between the complexes of reactants and the corresponding intermediates $\sum_i \nu_{\rho i} A_i \rightleftharpoons B_{\rho}$ linearised near their thermodynamic equilibria (quasiequilibria - equilibria along the reaction stoichiometric vectors);
\item A network of transitions between the intermediates described by linear kinetic equations (first order kinetics or Markov chain).
\end{itemize} 
Under the assumption that concentrations of $B_{\rho}$ ($\varsigma_{\rho}$) are sufficiently small, the linearised rate of transition is found in the form (\ref{Eq:LinearizReactRate})
$ r_{\rho}= \frac{1}{\tau_{\rho}}(\varsigma_{\rho}-\varsigma_{\rho}^{\rm qe})$, 
where $\varsigma_{\rho}^{\rm qe}$ are given by the thermodynamic formulas  (\ref{Eq:QEintermed}). We should notice now that the linearised reaction rate is exactly the GMAL reaction rate for the same reaction with 
\begin{equation}\label{Eq:Demyst}
r^-_{\rho}=\frac{1}{\tau_{\rho}}\varsigma_{\rho}, \;\; r^+_{\rho}=\frac{1}{\tau_{\rho}}\varsigma_{\rho}^*\exp\sum_i(\nu_{\rho i} \frac{\mu_i}{RT}).
\end{equation} 

Thus, after linearisation the {\it unknown} kinetics of transitions between complexes of reactant and intermediates near quasiequilibria and neglecting some high order terms (in the concentrations of intermediates), we obtain the GMAL system shown in Fig.~\ref{Fig:NetIntermed} with reaction rates (\ref{Eq:Demyst}). This reaction network was further simplified by the QSS approximation, where the small parameter is in the concentrations of the intermediates. 

It is not a miracle that starting with a GMAL network and applying the QSS approximation we again get a (reduced) GMAL network. This network (Fig.~\ref{Fig:NetIntermed}) is the main intermediate construction for deriving macroscopic chemical kinetic equations. It resembles the networks studied in the kinetics of catalytic reactions \cite{Marin2019}, and various networks of intermediates can give a rich class of the brutto kinetics equations in the QSS approximation.

\begin{figure}
\centering{
\includegraphics[width=\columnwidth]{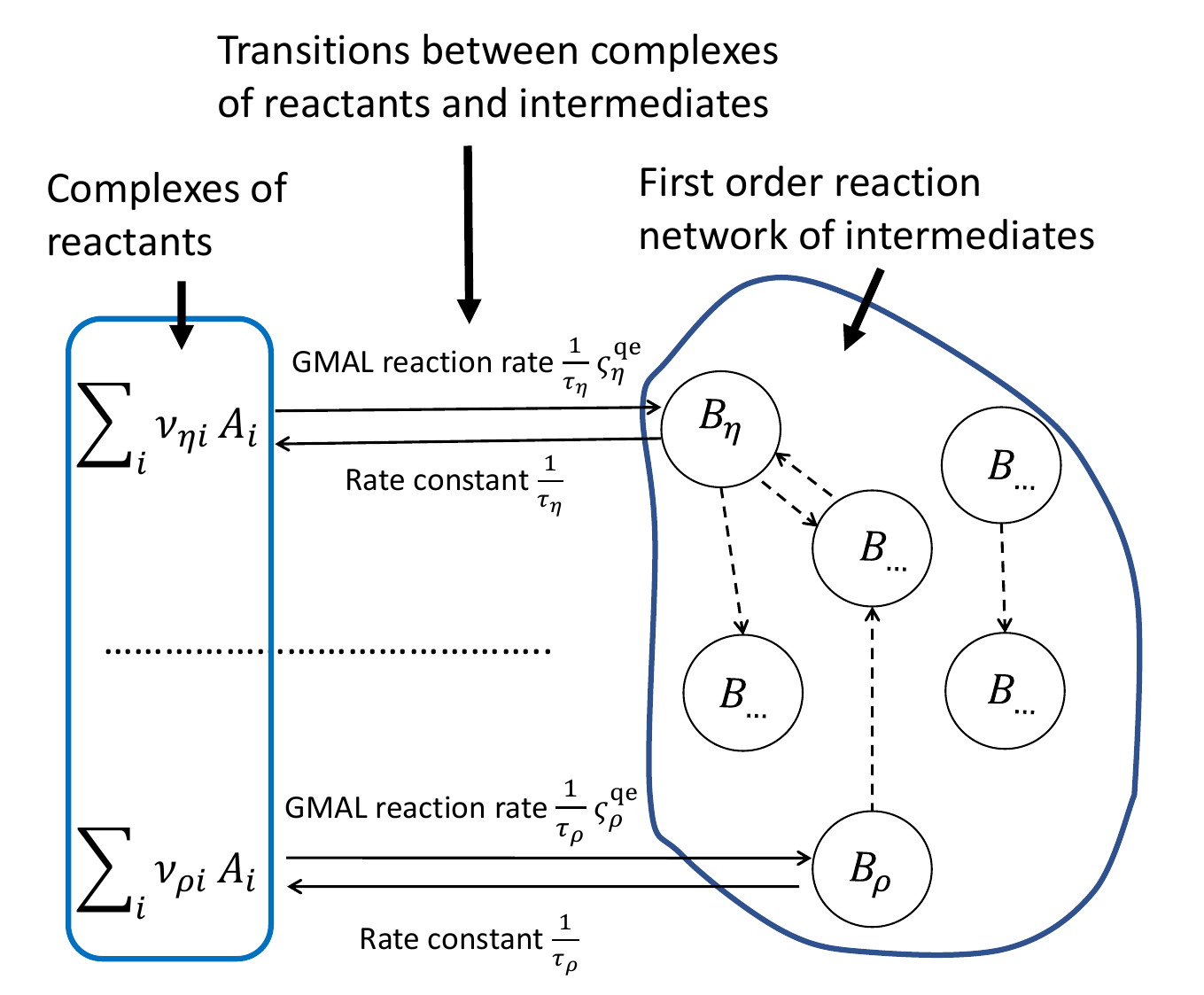}
 \caption{\label{Fig:NetIntermed}{\it TST auxiliary reaction network.} {  A network of GMAL reaction that corresponds to the reaction network with unknown kinetics of the transitions $\sum_i \nu_{\rho i} A_i \rightleftharpoons B_{\rho}$ linearised near their thermodynamic equilibria (quasiequilibria - equilibria along the stoichiometric vectors) and simplified using the smallness of intermediate concentrations. The first order reaction network between intermediates is shown on the right, reactant complexes on the left, and GMAL transitions between complexes and intermediates are shown in the middle.}}}
\end{figure}

Despite some similarity  between the TST auxiliary networks presented in Fig.~\ref{Fig:NetIntermed} and the networks of catalytic 
reactions \cite{Marin2019,YBGE1991} there is an important difference. In catalytic reactions, the intermediates, which include the catalysts and their compounds, participate in the input complexes. There are, for example, reactions like $A+Z\to AZ$, where $A$ is the reactant and $Z$ is the catalyst. In the TST auxiliary network (Fig.~\ref{Fig:NetIntermed}) the input complexes $\sum_i \nu_{\rho i} A_i$ do not include the intermediates. Therefore, the quasi-stationary asymptotics looks different for them.

The auxiliary reaction network presented in Fig.~\ref{Fig:NetIntermed} gives a convenient description of a complex reaction with fast intermediates that are not very far from the quasiequilibrium with the reactants and have small concentrations. The whole chain of simplifications from the reaction mechanism (\ref{Eq:StueckStoich}) with unknown kinetic law to the entangled GMAL equations (\ref{result2}) is presented in Fig.~\ref{Fig:FlowchartDemyst}.

\begin{figure}
\centering{
\includegraphics[width=\columnwidth]{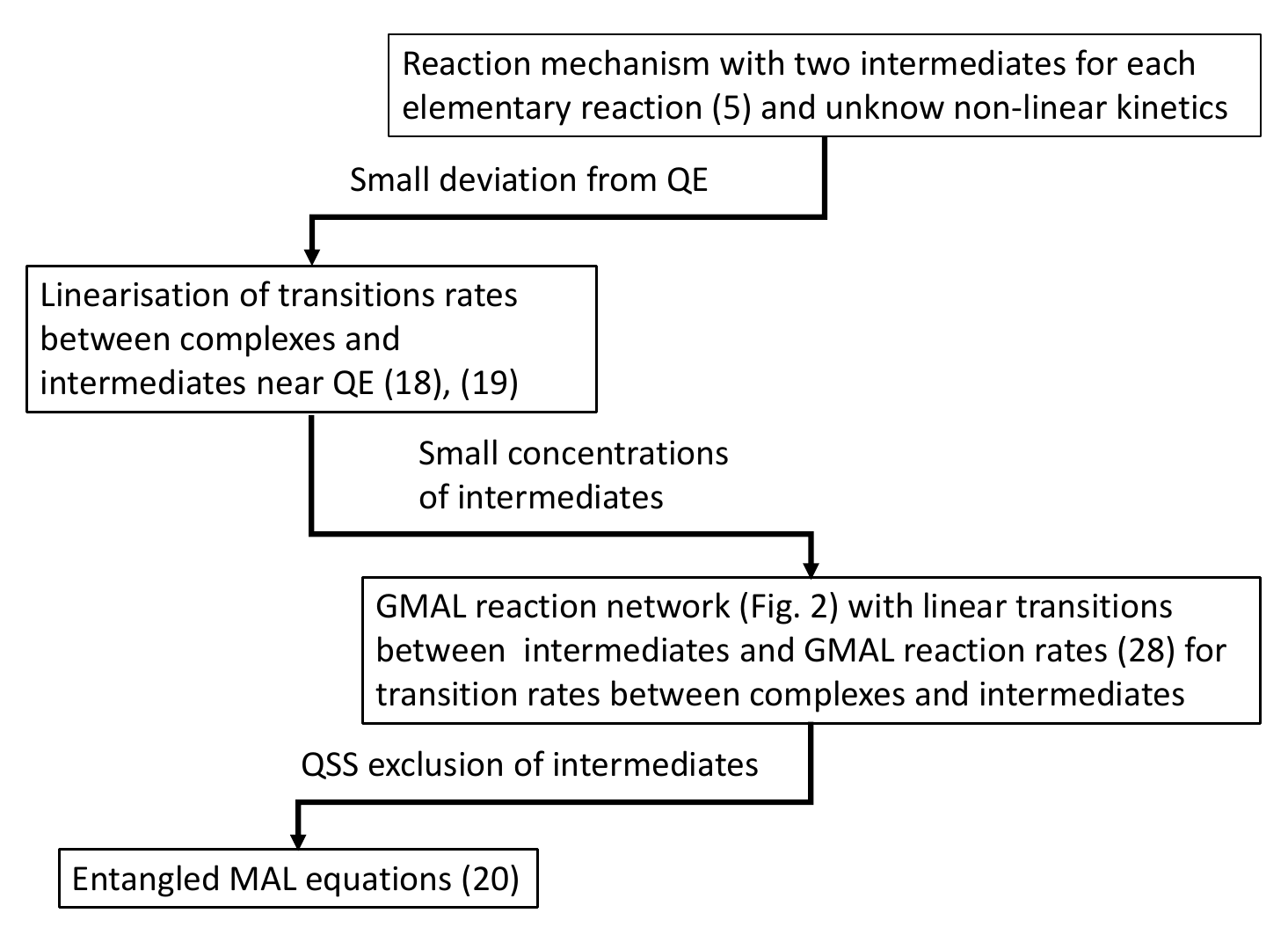}
 \caption{\label{Fig:FlowchartDemyst}{ {\it Asymptotic assumptions and stair of simplifications.} A stair  of simplifications is presented from the reaction mechanism (\ref{Eq:StueckStoich}) to the entangled GMAL equations (\ref{result2}). Arrows are labeled with asymptotic assumptions and the simplification results are described in rectangles.}}}
\end{figure}

\section{Conclusion}

{

The central part of TST is the quantum (or quasiclassical) theory of the activated complex and its transition. To embed these results in the broader context of chemical kinetics, additional dynamic and thermodynamic assumptions are needed. The standard approach uses an explicit quasiequilibrium assumption implicitly supplemented  by the assumption of small concentrations of activated complexes. In this paper, we analysed some aspects of the thermodynamic and kinetic background of TST, but do not touched on the quantum theory of the activated complex.

First, we demonstrated on a textbook example of the TST model how the  assumption of quasiequilibrium works without of low intermediate concentration. We used the assumption of quasiequilibrium to exclude the concentration of the activated complex from the kinetic equations. If the concentration of the activated complex is not low, then it can serve as a reservoir of the reactants. In this case, the resulting reaction rate differs from MAL and becomes more complex (\ref{Eq:quadraticQE}), (\ref{EqTNTkin}).

It was clearly demonstrated in the previous works \cite{Gorban2021, Perez-Benito2017} that the standard TST kinetics with the quasiequilibrium assumption and one intermediate, for example, $A+B \rightleftharpoons [A-B] \to P$, did not work for reversible reactions. Indeed, due to detailed balance  the reverse reaction should go through the same activated complex, $A+B \rightleftharpoons [A-B] \rightleftharpoons P$, and two quasiequilibria $A+B \rightleftharpoons [A-B]$ and $[A-B] \rightleftharpoons P$ together imply equilibrium and zero reaction rate. The solution to this paradox for gas kinetics is known since 1952 when Stueckelberg proposed {\it two activated complexes} model for collisions in the Boltzmann equation \cite{Stueckelberg1952}: $$A_1+A_2 \rightleftharpoons B_1 \rightleftharpoons B_2 \rightleftharpoons A_3+A_4.$$ 
The connection of this seminal work with chemical kinetics was revealed only in 2011 \cite{Gorban2011}.

Analysis of the asymptotic assumptions for simple kinetics demonstrates a significant asymmetry in the reaction rate constants for the intermediates even for the two-intermediate  models: in the transitions $A_1+A_2  \rightleftharpoons B \rightleftharpoons \ldots $ the reverse reaction
$ A_1+A_2 \leftarrow B$ should be faster than $B\to \ldots$ (Fig.~\ref{Fig:Asymmetry}). Without that asymmetry, the quasiequilibrium asymptotic does not work. Such asymmetry does not contradict any first principles but seems to be a very strong arbitrary assumption. To relax this assumption, quasiequilibrium relations are substituted by the relaxation time approximation near quasiequilibrium (\ref{Eq:LinearizReactRate}). These relaxation time models together with the smallness of intermediates led to the entangled mass action law equations (\ref{result2}) \cite{Gorban2021}.

One important issue was missed in \cite{Gorban2021}: positivity of the intermediate concentrations was not proven. It is well-known that manipulations with small parameters and asymptotics can result in violation of basic principles (for example, multiple equilibria and bifurcation may appear in the QSS asymptotic  even in kinetics of closed systems). We proved that the entangled mass action law equations (\ref{result2}) preserve positivity of the intermediate concentrations (Theorem~\ref{Th:Positivity}). 

Another novelty is in the backgrounds of the  entangled mass action law equations for reaction networks. We demonstrated that the relaxation time approximation for the reactions
$$\sum_i \nu_{\rho i} A_i \rightleftharpoons B_{\rho}$$ near quasiequilibrium (\ref{Eq:LinearizReactRate}) turns into the well-known GMAL equations for low intermediate concentrations.
We assume that these elementary reactions satisfy the thermodynamic restrictions and material balance conditions. The linear approximation of this a priory unknown kinetics near its partial equilibrium was used in \cite{Gorban2021} for derivation of (\ref{result2}). Surprisingly, under the same conditions, the kinetic law  (\ref{Eq:LinearizReactRate}) produced from an arbitrary kinetics coincided  with GMAL. This result is presented in Fig.~\ref{Fig:NetIntermed}. This figure on the left shows a list of reactant complexes. These are formal sums taken from the stoichiometric equations of elementary reactions \cite{Feinberg2019}. On the right, there is  a network of first order transitions between intermediates. The vertices of this network correspond to the intermediates $B_{\rho}$, $B_{\eta}, \ldots$. In the middle, the transitions between complexes and intermediates are showed.  Each complex is connected with an intermediate by a reversible reaction. The reaction rate of these reactions satisfy GMAL. 

The main novelty of this scheme is that the assumption of small concentrations of intermediates and the linearisation of arbitrary and unknown kinetics of transitions between complexes and intermediates led to a network of reactions (Fig.~\ref{Fig:NetIntermed}) with well-known GMAL kinetics between them. We aim to propose this asymptotically well-founded scheme as a standard for embedding TST results in the kinetics of complex reactions.

}

\section*{Declarations}

\subsection*{Author contribution statement:}

A. N. Kirdin; S. V. Stasenko: Conceived and designed the analysis; Analyzed equations and proved the results; Wrote the paper.

\subsection*{Data availability statement:}

    No data were used for the research described in the article.

\subsection*{Declaration of interest's statement:}   

   The authors declare no competing interests.
   
\section*{Funding}
The work was supported in part in 2021 by  the Ministry of Science and Higher Education of Russian Federation (Project No. 075-15-2021-634).

\end{document}